\documentclass[11pt]{article}

\usepackage[utf8]{inputenc}
\usepackage[in]{fullpage}
\usepackage{amsfonts, amsmath, amssymb}
\usepackage{multirow}
\usepackage{arydshln} 
\usepackage{color}
\usepackage[dvipsnames]{xcolor}
\usepackage{braket}
\usepackage{graphicx}
\usepackage{booktabs}

\usepackage{todonotes}

\usepackage{hyperref}
\hypersetup{
    colorlinks=true,
    linkcolor=blue,
    filecolor=magenta,      
    urlcolor=cyan,
    citecolor=blue,
    pdftitle={Spectral Centrality of Directed Hypergraphs},
    pdfpagemode=FullScreen,
    }

\usepackage[sorting=none,
            style=numeric,
            doi=false,
            isbn=false,
            url=false,
            eprint=false]{biblatex} 
\addbibresource{hypergraphs.bib} 
\addbibresource{centrality.bib} 
\addbibresource{textbooks.bib} 
\addbibresource{control.bib} 
\addbibresource{standard.bib} 
\addbibresource{datasets.bib} 
\addbibresource{newbib.bib}

\definecolor{darkgreen}{rgb}{0.09, 0.45, 0.27}
\definecolor{darkmagenta}{rgb}{0.55, 0.0, 0.55}

\usepackage{amsthm}
\newtheorem{theorem}{Theorem}[section]

\newtheorem{definition}[theorem]{Definition}

\newtheorem{proposition}[theorem]{Proposition}

\newcommand{\cT}{\mathcal{T}}
\newcommand{\cZ}{\mathcal{Z}}
\newcommand{\cH}{\mathcal{H}}
\newcommand{\cB}{\mathcal{B}}
\newcommand{\cF}{\mathcal{F}}
\newcommand{\R}{\mathbb{R}}
\newcommand{\bc}{\mathbf{c}}

\usepackage{bm}
\usepackage{dutchcal}   


\setlength{\parskip}{0.3\baselineskip}%

\usepackage{wrapfig}
\usepackage{tikz}
\tikzstyle{vertex} = [fill,shape=circle,node distance=80pt]
\tikzstyle{edge} = [fill,opacity=.5,fill opacity=.5,line cap=round, line join=round, line width=50pt]
\tikzstyle{elabel} =  [fill,shape=circle,node distance=30pt]

\pgfdeclarelayer{background}
\pgfsetlayers{background,main}



\title{\bf
\rule{\linewidth}{1pt}
Beyond directed hypergraphs: heterogeneous hypergraphs and spectral centralities 
\rule{\linewidth}{1pt}
} 

\makeatletter
\renewcommand\@date{{%
  \vspace{-2\baselineskip}%
  \large\centering
  \begin{tabular}{@{}c@{}}
    Gonzalo Contreras-Aso\footnote{Corresponding author: gonzalo.contreras@urjc.es} \textsuperscript{$,\,\sharp$,$\flat$}
  \end{tabular}%
  \quad and \quad
  \begin{tabular}{@{}c@{}}
    Regino Criado\textsuperscript{$\sharp$,$\flat$,$\natural$}
  \end{tabular}%
  \quad and\quad
  \begin{tabular}{@{}c@{}}
    Miguel Romance\textsuperscript{$\sharp$,$\flat$,$\natural$}
  \end{tabular}

  \bigskip

{\it \normalsize
  \textsuperscript{$\sharp$}Departamento de Matem\'atica Aplicada, Ciencia e Ingenier\'{\i}a de los Materiales y Tecnolog\'{\i}a Electr\'onica, Universidad Rey Juan Carlos, 28933 M\'ostoles (Madrid), Spain\par
  \textsuperscript{$\flat$}Laboratory of Mathematical Computation on Complex Networks and their Applications, Universidad Rey Juan Carlos, 28933 M\'ostoles (Madrid), Spain\par
  \textsuperscript{$\natural$}Data, Complex Networks and Cybersecurity Research Institute, Universidad Rey Juan Carlos, 28028 Madrid, Spain
}

  \bigskip

  \today
}}
\makeatother


\begin{document}

\maketitle

\begin{abstract}
    The study of hypergraphs has received a lot of attention over the past few years, however up until recently there has been no interest in systems where higher order interactions are not undirected. In this article we introduce the notion of heterogeneous hypergraphs from an algebraic point of view, which have traditional directed hypergraphs as a particular case. We furthermore analytically study the spectral centralities associated to some types of heterogeneous hypergraphs, extending previously defined eigenvector-like centrality measures to this new realm. We supplement the analytical arguments with some numerical comparisons of pairwise and higher order rankings, and we construct directed higher order networks from real data, which we then use for discussion and analysis.
\end{abstract}


\newpage

\section{Introduction} \label{sec:introduction}
    In recent years, the study of complex systems has transcended binary relationships and has embraced the multifaceted nature of interactions prevalent in real-world scenarios \cite{battiston2021physics}. Traditional graph structures, while suitable for representing pairwise relationships in networks, encounter limitations when grappling with interactions involving multiple entities simultaneously. Hypergraphs \cite{berge1984hypergraphs}, an extension of conventional graphs, provide a more expressive framework by accommodating hyperedges that link multiple vertices \cite{battiston2020networks,boccaletti2023structure}. This expanded structure allows for the representation of richer, multi-entity relationships, paving the way for a more comprehensive understanding of complex systems across various disciplines. 
    
    The concept of centrality, a cornerstone in network analysis, encapsulates various metrics that quantify the relative significance of nodes within a network \cite{boccaletti2006complex, estrada2012structure, newman2010networks}. A subset of these measures (the so-called spectral centralities \cite{vigna2016spectral}) are of paramount importance both for the possibility of analytical treatment that they provide and for the reduced computational cost that they usually entail when compared to other measures such as the betweenness centrality. However, their applicability and adaptations to hypergraphs pose intriguing challenges and opportunities for unveiling essential features of complex systems' structures and dynamics \cite{benson2019three}. 
    
    The task of translating traditional network theoretical concepts such as that of centrality measures, as well as other tools and paradigms to the undirected hypergraph realm is already a challenging endeavor, so much so that little to no attention has been paid to the case of directed hypergraphs. These structures have been known for some decades now \cite{gallo1993directed}, although the bulk of works with them have belonged to purely mathematical or computer science literature \cite{xie2016spectral,ausiello2017directed}. Nevertheless, it is natural to consider them as the next step for the network science community. 
    
    In the case of standard (pairwise) networks the meaning of directed interactions is clear, but when we consider interactions pertaining to multiple individuals the notion of directedness becomes fuzzy, to say the least. The traditional approach has made a somewhat sensible choice by defining a directed hyperedge as a hyperedge where some of its nodes are the ``input'' and some are the ``output'' \cite{gallo1993directed, gallo2022synchronization}. While this definition is very useful in terms of enabling a vast collection of results, it constrains the space of possible \textit{non-undirected} interactions. For this reason, we find that the term ``heterogeneous hyperedge'' is perhaps a more adequate label for non-undirected hyperedges, including directed hyperedges as a specific type of them. 
    
    The purpose of this paper is twofold: on the one hand we intend to raise awareness about the existence and interest of different kinds of heterogeneous interactions between several individuals, discussing some of the perhaps more interesting examples. On the other hand, we want to analyze how to extend the definitions of spectral centralities already existing for undirected hypergraphs to some of these heterogeneous hyperedges, and we will provide numerical examples of then applied to both real and synthetic hypergraphs. 
    
    This article is structured as follows: in Section \ref{sec:preliminaries} we will introduce the mathematical preliminaries to deal with undirected hypergraphs and their spectral centralities. In Section \ref{sec:heterogeneous} we delve into the problem of heterogeneity in the hyperedges, conceptually and technically. As we will see, the traditional view of directed hyperedges as set-like objects is rather limit, instead we will advocate for its definition in terms of components of the adjacency tensors/hypermatrices. In Section \ref{sec:directed} the concept of spectral centralities is extended to some simple cases of heterogeneity, including the well-known directed hypergraph case, supplemented by numerical simulations. In Section \ref{sec:conclusions} we conclude our study with a summary. 

\newpage

\section{Preliminaries and notation} \label{sec:preliminaries}

Let us start by defining a hypergraph based on the original definition, although \cite{berge1984hypergraphs} explicitly introducing the possibility of having directed hyperedges.
\begin{definition}[Hypergraph]
    A hypergraph $H = (V, E)$ consists of a vertex set $V$ and a hyperedge set $E$, where $V$ is a non-empty set of nodes or vertices and $E = \{e_1, e_2, \dots, e_m\}$ is a collection of hyperedges. Each hyperedge $e_i \subseteq V$ is an ordered subset of vertices, possibly containing more than two vertices.
\end{definition}

Hypergraphs generalize traditional graphs (where each hyperedge contains only two nodes) by allowing hyperedges to connect multiple vertices, offering a more expressive way to model relationships and interactions involving multiple elements simultaneously \cite{battiston2020networks,boccaletti2023structure}.

Let us stress the word ``ordered'': the bulk of the complex systems literature using hypergraphs has studied the undirected case, which is the simplest and assumes hyperedges to be unordered subsets. One of the goals of this manuscript is moving beyond this simple assumption. 

We will be analyzing the $\cH$-eigenvector centrality (HEC) and the $\cZ$-eigenvector centrality (ZEC) of a hypergraph $H=(V,E)$, first defined in \cite{benson2019three}. This is based on the spectral theory of tensors \cite{qi2017tensor}, in particular those associated to strongly connected, $m$-uniform hypergraphs. The discussion on strong connectivity is reserved for the coming section, while uniformity is defined as follows.

{

\begin{definition}[$m$-uniform hypergraph]
    Let $H=(V,E)$ be a hypergraph. $H$ is $m$-uniform if all of its hyperedges are of size $m$, i.e. $|e|=m,\,\forall e\in E$.
\end{definition}

This is a very strong restriction, however it provides us with strong analytical tools, as we now discuss.

\subsection{Algebraic properties}  \label{subsec:prelim-algebra}

Let us start by defining what is the adjacency tensor\footnote{The word ``tensor'' usually refers to mathematical objects satisfying specific transformation properties. Here we are instead making reference to multidimensional arrays (or hypermatrices), which we refer to as tensors for simplicity.} of a hypergraph as well as its strong connectivity.

\begin{definition}[Adjacency tensor of a uniform hypergraph] \label{def:adjacency-tensor}
Let $H=(V,E)$ be an $m$-uniform, unweighted hypergraph. Let $|V|=N$. Its associated adjacency tensor $\cT=(T_{i_1 ... i_m}) = \R^{[m,N]}$ is given by
\begin{equation}
    T_{i_1 ... i_m} =
    \begin{cases}
        1 & \text{if } (i_1, \dots, i_m) \in E \\
        0 & \text{otherwise.}
    \end{cases}
\end{equation}
\end{definition}

Here the notation $\R^{[m,N]}$ makes it explicit the fact that these objects are really just multidimensional arrays of size $\underbrace{N\times N\times \dots \times N}_{m}$.

The generalization to weighted hypergraphs is straightforward. Notice that we have not made any specific assumptions on the nature of the hyperedges, in particular we have not stated that they are undirected. We can already foretell that the order in $(i_1, \dots, i_m)$ will be of relevance when we discuss directionality.

\begin{definition}[Strongly connected hypergraph \cite{benson2019three}]\label{def:strong-conn}
Let $H=(V,E)$ be a $m$-uniform hypergraph with associated tensor $\cT\in\R^{[m,N]}$. $H$ is said to be strongly connected if the graph $G^M$ induced by the matrix $M=(M_{ij}) = \sum_{j_3 ... j_{m}} T_{i j j_3 ... j_m}$ is strongly connected.
\end{definition}

Clearly, the strong connectivity is completely reliant on the structure of the adjacency tensor. This will become very important in the case of directed hypergraphs.  {Note that, even when a hyperedge is not undirected, this definition still makes sense, as it yields a weight for the relation between two nodes, summing others involved in the larger interaction.}

Let us now define the so-called ``tensor apply'' operation \cite{benson2019computing}, which is nothing but the contraction of tensor $\mathcal{T}\in\R^{m,N}$ with vector $\bc \in \R^N$ producing yet another vector:
\begin{equation}\label{def:tensor-apply}
    \bm{x} = T \bc^{m-1} \Longleftrightarrow  x_{i_1} = \sum_{i_2...,i_m=1}^n T_{i_1 i_2 ... i_m} \bc_{i_2} ... \bc_{i_m}.
\end{equation}

This operation is the basis of the tensor eigenvector problem we will now discuss.

\subsection{The $\cH$-eigenvector centrality} \label{subsec:eiglike-centralities}

In \cite{benson2019three}, the author builds upon the theory of tensor eigenvalues \cite{qi2017tensor}, to define three spectral centrality measures for uniform hypergraphs: the clique-eigenvector centrality (CEC), the $\cZ$-eigenvector centrality (ZEC) and the $\cH$-eigenvector centrality (HEC). 

Here we will only review the latter, for several reasons. First, the CEC is equivalent to the standard eigenvector centrality of the projected hypergraph (where every hyperedge is substituted by a clique among the participant nodes), therefore missing any nonlinear information contained in the higher-order structure. Second, the ZEC is a tensor-based measure which yields a fixed-norm vector (i.e. it is not re-scalable), has no uniqueness guarantees and is computationally problematic \cite{qi2017tensor, benson2019three}. It is also not compliant with a uniformization procedure as in \cite{contrerasaso2023uplifting}. The HEC suffers none of these problems, as we will see. What's more, all ideas discussed in this manuscript also apply to the CEC and ZEC cases without extra effort.

Let's start by considering the simplest non-trivial case, that of a tensor $\cT=(T_{ijk})\in \R^{3,N}$. The $\cH$-eigenproblem for this tensor can be stated as
\begin{equation}
    \quad \lambda \bc^{[2]} = \cT \bc^2  \quad \Rightarrow \quad \lambda c_i^2 = \sum_{j,k} T_{ijk} c_j c_k.
\end{equation}
with the same ``tensor apply'' operation as before, and with the notation $\bc^{[m-1]}$ indicating the Hadamard (or componentwise) power of vector $\bc$.

This eigenproblem can be cast in a network science context as a centrality measure \cite{benson2019three}, bearing in mind that it should be applied to the transposed version of the hypergraph's associated tensor,{  $\cT^t$ (whose meaning and definition will be discussed in the next Section).}

\begin{definition}[HEC of a hypergraph \cite{benson2019three}] \label{def:HEC-centrality}
Let $H$ be the a $m$-uniform hypergraph, with an associated tensor $\mathcal{T}=(T_{i_1\dots i_m}) \in \R^{[m,N]}$. Its $\cH$-eigenvector centrality (HEC) is defined as the Perron-like (unique, positive) $\cH$-eigenvector $\bf{c}$ of $\cT^t$, i.e.
\begin{equation}
    \lambda \mathbf{c}^{[m-1]} =  (\mathcal{T})^t \mathbf{c} \dots \mathbf{c} \quad \Rightarrow \quad \lambda c_{i_1}^m = \sum_{i_2\dots i_m=1} (T_{i_1 \dots i_m})^t\, c_{i_2} \dots c_{i_k}, \quad \mathbf{c} > 0, \; \lambda = r(\cT^t),
\end{equation}
{where $r(\cT^t)$ denotes the spectral radius of $\cT^t$.}
\end{definition}


\section{Heterogeneous hypergraph zoology} \label{sec:heterogeneous}

In this Section we want to begin our quest to generalize hypergraphs, to include some notion of directedness. As we advanced in the introduction, so far in the literature the notion of directed hypergraphs has been restricted to a special type of higher order interaction, especially amenable to a set-theoretic description \cite{gallo1993directed}. Our first task is, therefore, breaking free from this constraint and understanding how we can build directed interactions from other points of view. What's more, even within the set theoretic framework we will see that it is easy to find new types of structures. The wide variety of hypergraph types requires a new adjective to describe them all, beyond traditionally directed hypergraphs, which is why we chose to call them heterogeneous hypergraphs. 

Once this issue has been dealt with, we will give some examples of where this description could be useful, and we will end this section with a technical interlude regarding an operation which is especially relevant in directed graphs, the transposition of the adjacency matrix, and how to extend it to higher order situations.

\subsection{Algebra, topology and set theory} \label{subsec:directed-algebra-sets}

Let \textcolor{red}{us} start from the very beginning: upon opening any textbook or scientific publication on graph theory, one is most likely going to find the definition of a graph to be $G=(V,E)$, where $V$ is the set of nodes and $E\subseteq V \times V$ is the set of edges \cite{estrada2012structure, newman2010networks}. The distinction between directed and and directed edges lies, in this context, on whether the edges $e\in E$ are ordered or unordered pairs of nodes. Notice that this is a set theoretic point of view: a graph can also be understood from a topological point of view (nodes and connections between them) or from an algebraic point of view (e.g. adjacency matrix), see Figure \ref{fig:three-perspectives}.

These different approaches are so important that they even have names of their own (topological graph theory, algebraic graph theory), as they make use of different techniques, tools and results. In standard graph theory these three perspectives complement each other, for standard graphs (regardless of directionality) are equivalently described in either.

\begin{figure}[h!]
    \centering
    \includegraphics[scale=0.7]{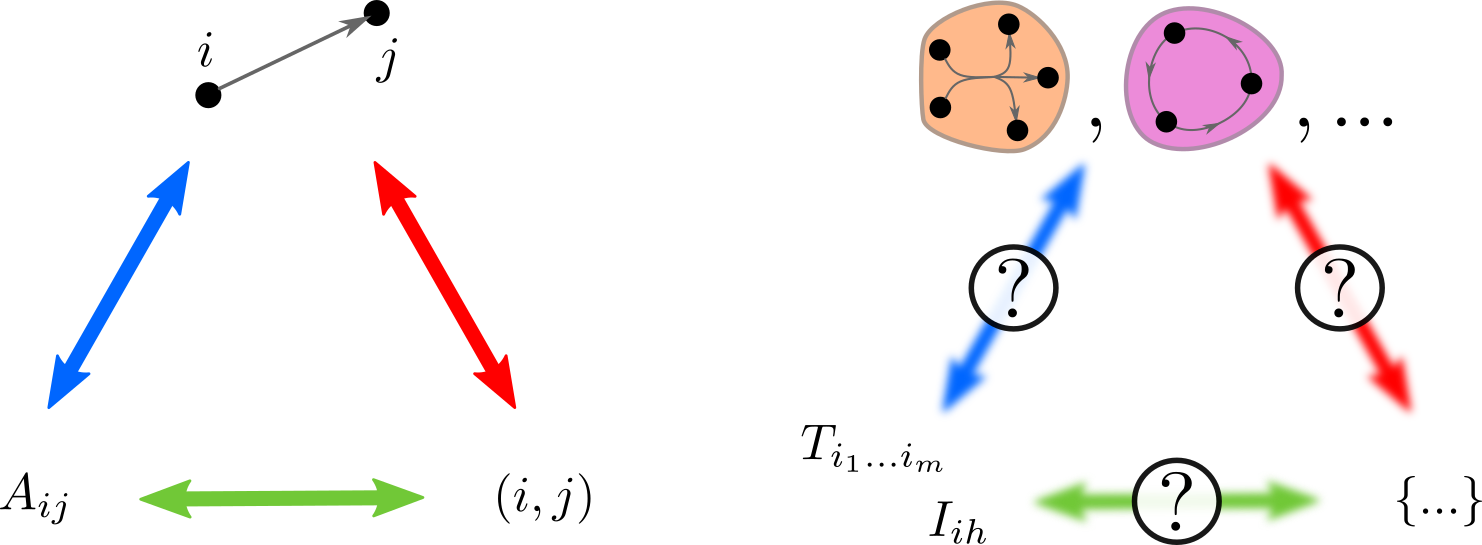}
    \caption{Three different points of view: topological (up), algebraic (down-left) and set-theoretic (down-right). In standard directed graphs (left triangle) the translation between the three perspectives is clear. In heterogeneous hypergraphs there is a vast landscape of possibilities, which makes the dictionary between any two perspectives unclear at best, otherwise impossible.
    }
    \label{fig:three-perspectives}
\end{figure}

\medskip

Starting from the set-theoretic viewpoint, the simplest idea is allowing each hyperedge to have multiple ``input'' nodes, and multiple ``output'' nodes. In \cite{gallo1993directed} this perspective is fleshed out: a \textit{directed} hyperedge\footnote{It should be noted that in some works in the literature \cite{jost2019hypergraph, mulas2021spectral} these hyperedges are called \textit{oriented} instead of \textit{directed}.} is a tuple $E_k=(H(E_k), T(E_k))$ where $H(E_k) \subseteq V$ is the ``head'' of the hyperedge, while $T(E_k) \subseteq V$ is the ``tail'' of the hyperedge, and such that\footnote{According to \cite{gallo1993directed}, either the head or the tail set could be empty although we will assume from here on that they aren't. } $H(E_k) \cap T(E_k) = \emptyset$. We will elaborate on this type of hypergraph structures on \ref{subsec:acyclic-BF}. See Figure \ref{fig:set-directed-hypergraph} for an example. 


This point of view is advantageous if the application in mind can leverage the set-theoretic nature of the hypergraph, or if the directed incidence matrix $I_{ij}$ of the hypergraph can be involved. However, if that is not the case, then this description of hypergraphs might not be suitable for our problem. 

Furthermore, even when within set-theoretic descriptions of higher order interactions, there are plenty of situations that are left out by the input-output paradigm: one could think of a higher order interaction where there are some input nodes, some intermediate nodes, and some output notes (see edge $E_4$ in Figure \ref{fig:set-directed-hypergraph} for an example). In that sense we not only generalize the amount of nodes in the end points of an edge, but we also generalize the amount of stages within a single interaction. This may be relevant, for instance, to represent complicated processes such as computer programs with different functions and modules with intermediate stages.

\begin{figure}[h!]
    \centering
    \includegraphics[width=0.8\textwidth]{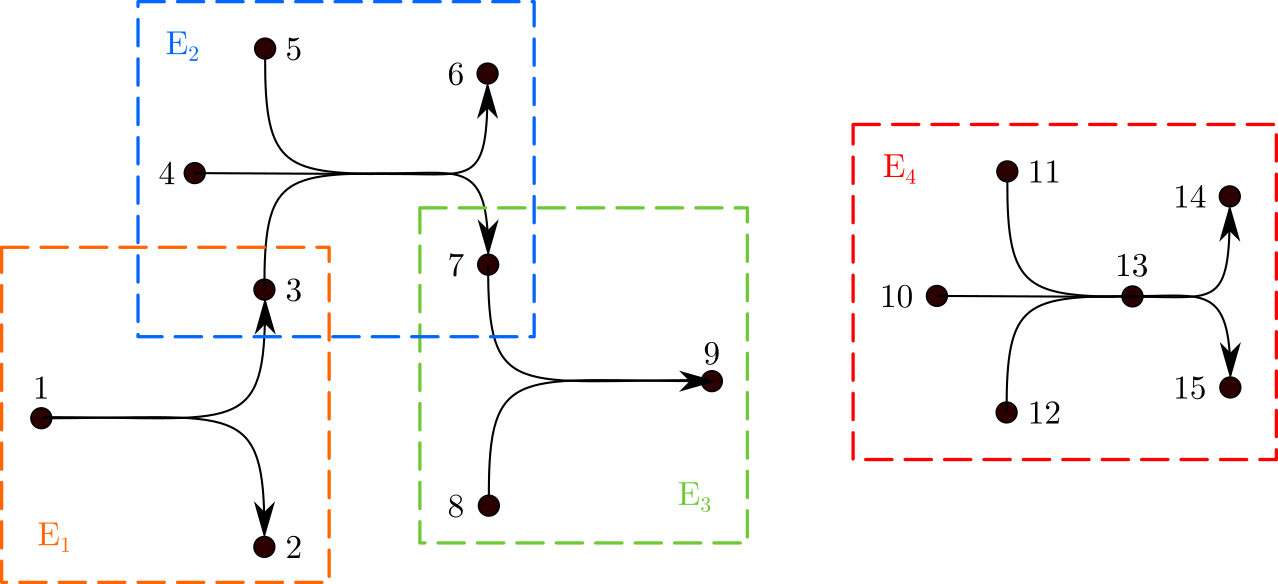}
    \caption{Example of a set-theoretic, directed hypergraph $H=(V,\{E_1,E_2,E_3,E_4\})$. The first three edges $E_1 = \{\{1\}, \{2,3\}\}, \, E_2 = \{\{3,4,5\}, \{6,7\}\}, \, E_3 = \{\{7,8\}, \{9\} \} $ are standard in directed hypergraphs. The last edge $E_4 = \{\{10,11,12\}, \{13\} , \{14,15\}\}$ is also sensible from a heuristic and set-theoretic point of view, but it does not fit in the input-output scheme. And even though we could reduce it to two edges (one with $13$ on its head, one with it on its tail), that defeats the purpose of higher order interactions: the same could be said about any hyperedge, it can be reduced to its pairwise constituents.}
    \label{fig:set-directed-hypergraph}
\end{figure}

This situation is reminiscent of the division between simplicial complexes and hypergraphs: the former is a more restrictive type than the latter, however if we are interested in a problem where its topological features and tools (e.g. homology) can be exploited \cite{battiston2021physics}, then it is advantageous to discuss it on its own. Here we can see that if the directedness is present in a binary fashion (input-output), these structures are worth considering, but we should keep in mind that there are other possible ways of non-undirectedness in the interactions of a hypergraph.


From a spectral centrality point of view, the directed hypergraph perspective is rather limiting, as the fundamental object in its study is the adjacency tensor and their spectral properties. While we can work out a way to successfully describe the adjacency tensor of these directed hypergraphs (see the next section), we can get more interesting structures if we start from an algebraic point of view. 

What we mean by this is, we will consider the adjacency tensor to the the \textit{fundamental} object describing the structure of a directed, uniform hypergraph, and we will from then make sense of the topology of the hypergraph. And, in certain cases, this will also provide a set-theoretic description, but there will be cases where no set-theoretic description is available. Hence, in the case of uniform hypergraphs, the set-theoretic point of view is a subset of a larger class of hypergraphs, which we call ``heterogeneous''. And even though we arrived at them via spectral centralities, it is worth noting that they are more ubiquitous.




Having said this, the space of possible heterogeneous, $m$-uniform hypergraphs is too vast (an adjacency matrix representing a directed graph can have $N^2$ different entries, and analogously an adjacency tensor can have $N^m$ different entries), and its analysis is therefore not possible in all generality. 

We will soon discuss some specific types of directedness depending on their tensorial representations. In that sense, and bringing back the terminology introduced in the undirected case, we will no longer have tensor components (i.e. hyperedges) symmetric under all possible permutations $\sigma(i_1 i_2 ... i_m)$. We will then show how these types of directedness and provide natural definitions for their transposition and strong connectedness (if any).


\subsubsection{Applications and examples of heterogeneous interaction systems}

The possibility of studying systems with heterogeneous interactions in the abovementioned sense seems so far like a mathematical pastime, bearing no connection to any real system, however that is far from reality. In fact, the original paper on directed hypergraphs \cite{gallo1993directed} already describes some use cases, and the need to describe heterogeneous interactions was already recognized more than a decade ago \cite{klamt2009hypergraphs}.

Here we briefly review some of the applications described in both references and we add some more which we think are of relevance:
\begin{itemize}
            
    \item Biochemical reactions. Online, open databases such as \cite{KIDA-datasets} or \cite{BiGG-datasets} can be mined in order to construct different directed hypergraphs where edges represent chemical reactions or metabolical pathways between metabolites, respectively \cite{klamt2009hypergraphs}. We performed this mining task, and the former example is later showcased in Subsection \ref{subsec:acyclic-BF}. 

    \item Citation networks. A classic example of networks and hypergraphs is that of authors as nodes, linked together if they are coauthors in a paper. In the pairwise case the coauthorship is established between any two authors if they wrote a paper together, possibly with others. In the undirected hypergraph case the paper itself is a single hyperedge between all coauthors. But this modelling dismisses the order within the list of authors, a key piece of information in many fields. This ordering can be introduced in the heterogeneous hyperedge case, but not in the directed case. 
    
    \item Urban transit. In \cite{gallo1993directed} it is argued that there are many mobility networks which can be abstracted as directed hypergraphs, where some nodes are stops within transportation lines (e.g. metro), and thus connected via pairwise edges representing said lines, while other nodes serve as the actual transit stops, which individuals need to traverse by walking to change lines \cite{criado2010hyperstructures}. These are represented as directed hyperedges.

    \item Routing/delivery transportation. It is relatively common for transportation businesses to have their trucks repeat a predefined path (e.g. to supply the same stores periodically). This has traditionally been modelled by dividing such paths into their pairwise components, however with heterogeneous hypergraphs we could consider the path itself to constitute a heterogeneous hyperedge, and encode it in the adjacency matrix as an asymmetric tensor component.
    
    A more restrictive, though more tractable version of this is the case where all paths are of the same length $k$. This is then related to the $k$-step eigenvector centrality originally put forward in \cite{xu2023two} in the $k=2$, undirected case, which we will extend in Subsection \ref{subsec:acyclic-kstep}.

    

    \item Online social interactions. Hypergraphs can be constructed from online forum data, where nodes represent users and hyperedges represent threads where they participate. Different amounts of participation can be encoded in the adjacency tensors via asymmetric tensor components, leading to a heterogeneous hypergraph.
    
    \item Propositional logic and relational databases. In \cite{gallo1993directed} different such hypergraphs are considered, with the most prominent type abstracting relational databases, where nodes represent propositions, with directed hyperedges linking a series of causes to their consequences. Notice that here the projection of the hypergraph to its pairwise constituents is losing key information (one needs all propositions to be fulfilled at once in order to prove their consequence).

\end{itemize}

It should be noted that we will not attempt to construct all of these hypergraphs, as this manuscript is of a theoretical nature, and we moreover are interested in those where spectral centrality measures can be computed (for instance, in propositional hypergraphs the avoidance of circular arguments renders eigenvector-like centralities impossible due to the lack of strong-connectivity). Nevertheless we will exemplify our methods with some real hypergraphs in the next section.

\subsection{The transposition operation} \label{subsec:prelim-transposition}

Before moving on to discuss spectral centralities in special heterogeneous hypergraphs, there is a technical detail which must be dealt with in all those cases.

Recall the standard Eigenvector Centrality (EC) of a graph. There, the transposition step is quite natural, either from a topological point of view (inverting the orientation of each edge), from a set-theoretic point of view (directed edges are ordered sets, transposition means inverting such order), or from an algebraic point of view (the matrix transposition maps components of a matrix into those of the transposed matrix). 

For this reason, computing the EC of a graph is equivalent to computing the Perron eigenvector of the ``topologically transposed'' or ``set-theoretical transposed'' graph. In some sense, there is, as we mentioned before, a clear connection between topology, set-theory and algebra. In the case at hand, said the connection is ``broken'', or rather, it is non-unique. Therefore, in order to make sense of the transposition, we need to establish ``by hand'' a clear connection between them. 


Our approach is an algebraic one, something which is quite reasonable taking into account that the $\cH$-eigenproblem is algebraic in nature. With that in mind, we can think of the transposition as a map from the space of tensor components to itself 
\begin{definition}[Transposition of an adjacency tensor]
    Let $\cT\in\R^{[m,N]}$ be an adjacency tensor and $\sigma$ a permutation of $m$ elements. The transposition of the tensor is defined, component by component, as $(T_{i_1 ... i_m})^t = T_{\sigma(i_1 ... i_m)},$
    where $\sigma(i_1 ... i_m)$ is the permutation of the indices $\{i_1, ..., i_m\}$. 
\end{definition}

Essentially, it is a choice of permutation of the indices involved, which in the matricial case is unique, i.e. $(A_{ij})^t = A_{ji}$. In the tensorial case this choice is not unique, but we can make it such that it is mathematically sensible for our problem, depending on the type of directedness we are considering.

Knowing what the transposition looks like from an algebraic point of view, we can now connect it to topology/set-theory.
\begin{definition}[Transposed hypergraph] \label{def:transposed-hypergraph}
    The transposed hypergraph $H^t$ associated to the hypergraph $H$ is the hypergraph induced by the transposed adjacency tensor $(\mathcal{T})^t$ of $H$.
\end{definition}

What the final structure (the transposed hypergraph) will look like can't be generally assessed, instead we will need to consider appropriate transpositions for each directed case, and that will force upon us the topological/set-theoretic nature of the transposed hypergraph.

What does this entail for the already established, undirected case? Our algebraic definition of a transposition in terms of index shifts is in complete agreement with what's been done so far for undirected hypergraphs. There, the transposition step is ignored altogether. And indeed, in an undirected hypergraph we have the exceptional property that $T_{i_1 ... i_m}=T_{\sigma(i_1 ... i_m)}\, \forall \sigma$, hence no matter which transposition rule we choose the outcome would be the same.

\section{Spectral centralities of heterogeneous hypergraphs} \label{sec:directed}

The study of spectral centralities in centrality in directed hypergraphs poses, as discussed above, a conceptual problem: What kind of transpositions are more mathematically coherent? 

It is not possible to answer it in all generality, for all possible, unconstrained tensors $\cT$. For that reason, we will be giving sensible definitions to the transposition of specifically structured tensors. These will represent either newly defined ones (cyclical hypergraphs and $k$-step hypergraphs) or more well-known ones (directed hypergraphs).

Also, depending on the constraints we place on the available hyperedges we will be able to give more specific details about the connectivity of the heterogeneous hypergraph. In that regard, it is important to keep in mind that the strong connectivity requirement for the existence and uniqueness of HEC is not a requirement on the hypergraph $H$, but on the transposed hypergraph $H^t$.

\subsection{Cyclical hypergraphs} \label{subsec:cyclic-hyperedges}

In this particular kind of heterogeneous hypergraph we restrict hyperedges to have a ``cyclicity'' permutation symmetry, as defined below.
\begin{definition}[Cyclical hyperedge]
    Let $\sigma_{\rm odd}$ and $\sigma_{\rm even}$ be the set of all even and odd permutations of $m$ elements, respectively. A cyclical hyperedge between nodes $\{i_1, \dots, i_m\}$ corresponds to 
    \begin{equation}
        T_{\sigma(i_1, \dots, i_m)} = T_{\sigma'(i_1, \dots, i_m)},\; \forall \sigma, \sigma' \in \sigma_{\rm odd} \quad \text{and} \quad T_{\sigma(i_1, \dots, i_m)} = T_{\sigma'(i_1, \dots, i_m)},\; \forall \sigma, \sigma' \in \sigma_{\rm even}
    \end{equation}
\end{definition}

For example, $T_{123} = T_{312} = T_{231}$ and $T_{132} = T_{213} = T_{321}$ correspond to the two possible cyclical hyperedges between nodes $\{1,2,3\}$.
Note that this type of heterogeneous hyperedges are already not possible to capture with a set-theoretical description (there is no split heads/tails).

A sensible definition for the transposition of the tensor in this scenario is mapping even permutations to odd permutations and viceversa by totally reversing the order of the indices
\begin{equation}
    (T_{i_1 \dots i_{m}})^t = T_{i_{m} \dots i_1}.
\end{equation}

In this type of hypergraph, the ``tensor apply'' operation \eqref{def:tensor-apply} required for centrality calculations reads
\begin{equation}\label{eq:tensor-apply-cyclic}
    \left[ (\cT)^t \bc^{m-1} \right]_{i_m} = \sum_{i_1, \dots,i_{m-1}=1}^n T_{i_{m} \dots i_1} \, c_{i_1} \dots c_{i_{m-1}}.
\end{equation}

Let us now address strong connectivity in regards to the transposition operation. In this case, a cyclic hyperedge in $H$ translates into a directed cycle $C_{m}$ in $G^M$ (see Definition \ref{def:strong-conn}), therefore we do not have to worry about the transposition in order to examine the connectivity, as either orientation of a directed cycle is strongly connected. 

In principle one can always study the HEC of a heterogeneous, $k$-uniform hypergraph, provided it is strongly connected, however this will yield no new results in comparison with its undirected counterpart, where $T_{\sigma(i_1...i_m)} = T_{\sigma_{even}(i_1...i_m)} + T_{\sigma_{odd}(i_1...i_m)}$. The reason for this is the fact that  \eqref{eq:tensor-apply-cyclic} does not distinguish between the cycle orientations, hence it yields uninteresting results.

While this particular kind of heterogeneous hypergraph is not interesting to study from a spectral centrality point of view, its existence and possible applications to model systems have yet to be unveiled.



\subsection{Directed hyperedges} \label{subsec:acyclic-BF}

We will now consider a directed hypergraph. This is still too broad to tackle, we need to narrow it down further. But before doing so, let us introduce some definitions from set-theoretic hypergraphs.

\begin{definition}[$\mathcal{B}$- and $\mathcal{F}$-hyperedges \cite{gallo1993directed}]
    A backward hyperedge, or simply $\mathcal{B}$-hyperedge, is a hyperedge $e = (t(e), h(e))$ with $|h(e)|=1$. A forward hyperedge, or simply $\mathcal{F}$-hyperedge, is a hyperedge $e = (t(e), h(e))$ where $|t(e)|=1$.
\end{definition}

It is common to refer to $h(e)$ as the ``head'' of the hyperedge, and to $t(e)$ as its ``tail'' \cite{gallo1993directed}. Notice that any hyperedge $e=(t(e), h(e))$ not belonging to either of them (i.e. $|t(e)|,|h(e)|\neq 1$) can be converted into a $\cB$-hyperedge and a $\cF$-hyperedge by splitting it placing a node between them (although from the point of view of centrality this is not an appropriate operation, for it creates new nodes with their own centrality). A hypergraph consisting of just $\cB$-hyperedges ($\cF$-hyperedges) is called a $\cB$-hypergraph ($\cF$-hypergraph).


Directed hypergraphs can be encoded as a series of adjacency tensors whose components satisfy certain symmetry constrains (this has already been proposed by \cite{gallo2022synchronization}). Each directed hyperedge contributes to the corresponding adjacency tensor as follows:
\begin{itemize}
    \item A backward hyperedge $E = (\{i_1 \dots i_{m-1}\}, \{j\})$ corresponds to tensor components 
    \begin{equation}
        T^{(m)}_{\sigma(i_1 \dots i_{m-1}) j},
    \end{equation} 
    for any permutation $\sigma$.
    \item A forward hyperedge $E = (\{i\}, \{j_2 \dots j_{m}\})$ corresponds to tensor components 
    \begin{equation}
    T^{(m)}_{i \sigma(j_1 \dots j_{m})},    
    \end{equation}
    for any permutation $\sigma$.
    \item A general hyperedge $E = (\{i_1 \dots i_{m'-1}\}, \{j_{m'} \dots j_{m}\})$ corresponds to tensor components 
    \begin{equation}
        T^{(m)}_{\sigma_1(i_1 \dots i_{m'-1}) \sigma_2(j_{m'} \dots j_{m})},
    \end{equation}
    for any permutations $\sigma_1,\sigma_2$.
\end{itemize}

{From an algebraic point of view}, the transposition of such a tensor corresponds to the index transposition 
\begin{equation}
(T^{(m)}_{i_1 \dots i_{m'-1} j_{m'} \dots j_{m}})^t = T^{(m)}_{j_{m'}  \dots j_{m} i_1 \dots i_{m'-1}},
\end{equation}
for each tensor $T^{(2)}, T^{(3)}, ..., T^{(M)}$. The following result is then straightforward.


\begin{proposition}[Transposed directed hypergraph]\label{prop:transposed-hypergraph}
The transposed hypergraph $H^t$ of a directed hypergraph $H=(V,E)$ is the result of interchanging $h(e)$ and $t(e)$ in each hyperedge $e\in E$.
\end{proposition}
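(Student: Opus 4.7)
My plan is to verify the correspondence hyperedge-by-hyperedge, pushing each hyperedge through the algebraic transposition rule and translating the result back into the set-theoretic picture. Fix a directed hyperedge $e = (\{i_1,\dots,i_{m'-1}\},\{j_{m'},\dots,j_m\}) \in E$. By the encoding stated just above the proposition, the nonzero components of $\mathcal{T}^{(m)}$ coming from $e$ are exactly those of the form $T^{(m)}_{\sigma_1(i_1\dots i_{m'-1})\,\sigma_2(j_{m'}\dots j_m)}$ as $\sigma_1$ ranges over permutations of the tail indices and $\sigma_2$ over permutations of the head indices.

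Next, I would apply the transposition rule $(T^{(m)}_{a_1\dots a_m})^t = T^{(m)}_{a_m\dots a_1}$ to each such component. Reversing the whole index string sends a block of tail indices permuted by $\sigma_1$, followed by a block of head indices permuted by $\sigma_2$, to the block of head indices (in a reversed order) followed by the block of tail indices (in a reversed order). Writing $\tilde\sigma_i$ for the permutation of the corresponding block obtained by composing $\sigma_i$ with the reversal, the nonzero components of $(\mathcal{T}^{(m)})^t$ inherited from $e$ are precisely those of the form $T^{(m)}_{\tilde\sigma_2(j_{m'}\dots j_m)\,\tilde\sigma_1(i_1\dots i_{m'-1})}$. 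As $\sigma_1,\sigma_2$ each range over the full symmetric group on their block, so do $\tilde\sigma_1,\tilde\sigma_2$, so we recover arbitrary permutations $\tau_1,\tau_2$ of the head-block and tail-block respectively.

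This pattern is exactly the encoding of the directed hyperedge $e' = (\{j_{m'},\dots,j_m\},\{i_1,\dots,i_{m'-1}\})$, that is, $e$ with $h(e)$ and $t(e)$ swapped. Running the argument over every $e \in E$ and every uniformity order $m$ shows that $(\mathcal{T})^t$ is the adjacency tensor of the hypergraph obtained by interchanging head and tail in every hyperedge of $H$; invoking Definition \ref{def:transposed-hypergraph}, this hypergraph is $H^t$. The $\mathcal{B}$- and $\mathcal{F}$-hyperedge cases drop out as the specializations $|h(e)|=1$ and $|t(e)|=1$.

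The only genuine obstacle, and it is a mild one, is the middle step: checking that composing the index-reversal with an arbitrary permutation on one block still sweeps out the full symmetric group on that block once it sits in its new position. This is an elementary fact about disjoint actions of symmetric groups on concatenated index strings, but it is the step where one could accidentally undercount or overcount components, so I would be explicit about it rather than leave it to the reader. Everything else is essentially a rewriting of definitions.
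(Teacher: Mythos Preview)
Your argument is correct, but it is built on a different transposition rule from the one the paper actually adopts for directed hypergraphs. You apply the full index reversal $(T^{(m)}_{a_1\dots a_m})^t=T^{(m)}_{a_m\dots a_1}$, which is the choice the paper makes for \emph{cyclical} hyperedges. For directed hypergraphs, however, the paper fixes the block-swap rule
\[
(T^{(m)}_{i_1\dots i_{m'-1}\,j_{m'}\dots j_m})^t=T^{(m)}_{j_{m'}\dots j_m\,i_1\dots i_{m'-1}},
\]
and then simply declares the proposition ``straightforward'' without writing a proof. With that rule, applying the transposition to $T^{(m)}_{\sigma_1(i_1\dots i_{m'-1})\,\sigma_2(j_{m'}\dots j_m)}$ produces $T^{(m)}_{\sigma_2(j_{m'}\dots j_m)\,\sigma_1(i_1\dots i_{m'-1})}$ on the nose, which is already the encoding of the head/tail--swapped hyperedge; no reversal-composed-with-$\sigma_i$ bookkeeping is needed.

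Your route still lands in the right place precisely because of the symmetric-group step you flagged: on tensors with full $S_{m'-1}\times S_{m-m'+1}$ symmetry in the two blocks, full reversal and block swap yield the same set of nonzero components, so the induced transposed hypergraph is the same. In that sense your proof is a slightly more general observation (it shows that any permutation moving the head block in front of the tail block would serve as a transposition here), at the cost of an extra verification that the paper's chosen rule sidesteps entirely.
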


As we will see, the distinction between the two types of hyperedges will have dramatic consequences in centrality computations.

\subsubsection{$\cB$-hypergraphs}

The first case of directed hypergraphs we will consider is that where only $\cB$-hyperedges are present, with a fixed number of nodes per hyperedge (i.e. $m$-uniform). In this case the transposition reduces to
\begin{equation}
(T_{i_1 \dots i_{m-1} j})^t = T_{j i_1 \dots i_{m-1}}.
\end{equation}

This makes it perfectly suitable for the HEC centrality calculations, as the free index in the sum is the first one, and the sums run equally over the remaining indices. It is important to notice that the $\mathcal{H}$-eigenvector one needs to compute is that corresponding to the transposed hypergraph, which will be the one whose connectivity is called into question. 

The ``tensor apply'' operation in this case becomes
\begin{equation}
    \left[ (\cT)^t \bc^{m-1} \right]_j = \sum_{i_1, \dots,i_{m-1}=1}^n T_{j i_1 \dots i_{m-1}} \, c_{i_1} \dots c_{i_{m-1}}.
\end{equation}

However, for this operation to make sense, we need to impose uniformity in the number of head nodes of a hyperedge, so as to be able to construct a single tensor from them. Hence we arrive at the following definition.

\begin{definition}[$\cB$-uniformity]
Let  $H=(V,E)$ be a directed hypergraph, $m\in \{1,\dots, \max_{e\in E}{|e|}\}$. $H$ is said to be $\cB$-uniform if $m=|h(e)|, \forall e \in E$. 
\end{definition}

We will discuss and compute numerically centralities of hypergraphs fulfilling this constraint in a later subsection, but before that let us examine the other kind of hyperedges.


\subsubsection{ $\mathcal{F}$-hypergraphs}

We now turn to the opposite case, that of directed hypergraphs containing only $\cF$-hyperedges, with a fixed number of nodes per hyperedge (i.e. $m$-uniform)\footnote{
This type is employed in \cite{xie2016spectral, qi2017tensor} with a specifically weighted adjacency tensor 
}. As we will see, this case is rather troublesome when it comes to centrality computations.

In this case the transposition reduces to
\begin{equation}
(T_{i j_2 \dots j_m})^t = T_{j_2 \dots j_m i}.
\end{equation} 

If we naïvely adapt the ``tensor apply'' operation, we get a mixing of the indices summed over due to the transposition, in the sense that we sum over both the tail node and some head nodes. 
\begin{equation}
    \left[ (\cT)^t \bc^{m-1} \right]_{j_2} = \sum_{j_3,  \dots, j_{m},i=1}^n  T_{j_2 \dots j_m i} \, c_{j_3} \dots c_{j_m} c_{i}.
\end{equation}

This is heuristically incorrect: in directed edges the centrality contribution flows from the ``input'' node(s) to the ``output'' one(s), a fact that is later reflected in the sum. However, if we take this operation at face value, we see that all head nodes will contribute to each other. 

We argue that the correct operation involved in the centrality calculation is not the standard ``tensor apply''. Instead, centrality in $\cF$-hypergraphs should be calculated using
\begin{equation}
     \lambda c_{j} = \sum_{i\rightarrow \{j, \dots, j_m\}} c_i = \frac{1}{(m-1)!} \sum_{j_3, \dots, j_{m}, i = 1}^n  T_{j j_3 \dots j_m i} \, c_i,
\end{equation}
which is, in essence, a glorified version of the eigenvector centrality (as indeed there are $(m-1)!$ identically valued components $T_{j\sigma(j_3 \dots j_m) i}$). 

This implies that the tensorial nature of the hypergraph is actually shadowed in this case by its pairwise directed relations. Nevertheless, there is at least a unique and easy to compute solution to this problem. And this further implies that our initial restriction to $m$-uniform $\cF$-hypergraphs was not necessary: we can include hyperedges of any given size in the sum above, provided we take into account the $1/(m-1)!$ factors, with $m$ the size of the corresponding $\cF$-hyperedge.

\subsubsection{General directed hypergraphs}

Now that we understand how does the ``tensor apply'' operation act on both forward and backward hyperedges, we can discuss general directed hypergraphs and their centrality. 

First, note that backward hyperedges impose a restriction on the number of nodes at the input of every interaction, namely we will require the hypergraph to be $\cB$-uniform. On the contrary, forward hyperedges impose no restriction, as their tensor apply operation needs to be replaced by a projection-like operation. 

Therefore, for a $\cB$-uniform directed hypergraph we can define the following generalizations of the HEC centrality measures, provided it is strongly connected.

\begin{definition}[HEC of a directed hypergraph]
Let $H=(V,E)$ be the a strongly connected, $\cB$-uniform hypergraph, with $m_B=|h(e)|,\, \forall e\in E$ head nodes, and $m_F=\max(|t(e)|),\, e\in E$ the maximum number of tail nodes in any hyperedge. The $\cH$-eigenvector centrality of $H$ is the unique, positive vector $\bc\in \R^n$ satisfying
\begin{equation}
    \lambda c_{j_1} = \sum_{i_1, \dots,i_{m_B}=1}^n \, \sum^{m_F}_{m=1} \frac{1}{m!}  \sum_{j_1, \dots, j_{m}}^n  T_{j_1 \dots j_m  i_1 \dots i_{m_B}} \, c_{i_1} \dots c_{i_{m_B}}.
\end{equation}
\end{definition}

Although this definition may seem daunting, operationally it is rather clear: Given a $\cB$-uniform hypergraph where $m_B$ is the number of head nodes (fixed due to the uniformity constraint), we turn every hyperedge into several ones, one per tail node, adjusting the combinatorial factors to avoid overcount due to the symmetry of the adjacency tensor. We are then left with a tensor $\cT\in\R^{[m_B,N]}$, whose centrality we can compute as its Perron-like eigenvector, like in the undirected case \cite{benson2019three}.

\subsubsection{Numerical examples}

Finding datasets of directed hypergraphs freely available is far from trivial, especially if we are interested in real data ones rather than synthetic ones. However, we managed to construct several of them related to chemical \cite{RTG-datasets} and astro-chemical reactions\footnote{We also constructed others from metabolical reactions \cite{BiGG-datasets}, but for the sake of conciseness the computation of the rankings and comparisons are left in the repository, see Data Availability section.} \cite{KIDA-datasets}. Once the hypergraphs have been constructed \cite{XGI-library}, we find the strongly connected sub-hypergraph satisfying the $\cB$-uniformity condition, if any.

From these hypergraphs we can then compute their directed $\cH$-eigenvector centrality (using the code we already developed and made available in \cite{contrerasaso2023uplifting}) and that of their projection graph. The top $5$ nodes in each example, as obtained via the standard eigenvector centrality of the projected graph, as well as with the newly defined measure for directed hypergraphs, are shown for comparison in Table \ref{tab:top5ranking}.

\begin{table}[h!]
    \centering
    \begin{tabular}{|c|cc|cc|cc|cc|cc|}
    \toprule
        - & \multicolumn{2}{c}{\bf RTG}  & \multicolumn{2}{c}{\bf unibimolecular} & \multicolumn{2}{c}{\bf surface} & \multicolumn{2}{c}{\bf astro} \\ 
        \midrule
        - & {\bf EC} & {\bf HEC} & {\bf EC} & {\bf HEC} & {\bf EC} & {\bf HEC} & {\bf EC} & {\bf HEC} \\ 
        \midrule
        1 & H & OH & H & e- & H2 & H & H & E \\ 
        2 & OH & H & H2 & Photon & CH4 & CH3SH & H2 & C \\ 
        3 & O & O &  CO & O & H & CH3 & CO & C+ \\ 
        4 & CH3 & CH3 &  C & C & CH3 & H2CS & C & O \\ 
        5 & HCO & HCO &  He & C+ & NO & HNO & CN & H+ \\ 
    \bottomrule
    \end{tabular}
    \caption{First 5 elements ranked by both the Eigenvector Centrality (EC) of the projected graph and the directed $\cH$-eigenvector Centrality (HEC), in each of the datasets (RTG, unibimolecular, surface and astro).}
    \label{tab:top5ranking}
\end{table}

We can also compare these measures in terms of the Spearman's $\rho$ coefficient between both measures, which measures the similarity of two rankings (ranging from 1, meaning identical rankings, to -1, meaning completely opposite rankings). The correlation between rankings for the first $K$ components of each hypergraph are shown in Figure \ref{fig:chemical-correlation}.  {Notice that, even though the covariance is symmetric, the rankings H3-E and E-H3 differ: that is due to the fact that, before the last datapoint, the correlation compares the first $K$ nodes in each measure with their relative position in the other, but these first $K$ nodes need not be the same on each measure.}

\begin{figure}[h!]
    \hspace{-0.4in}
    \includegraphics[width=1.1\textwidth]{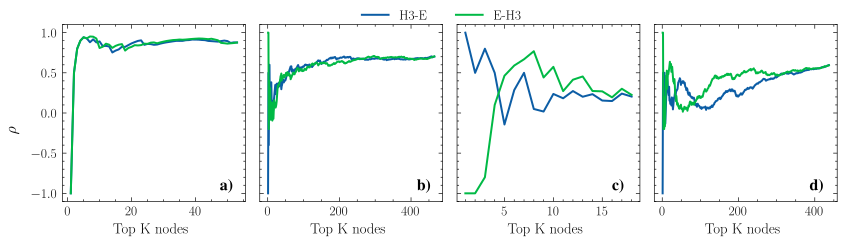}
    \caption{Comparison between Spearman's $\rho$ correlation between the Eigenvector Centrality (E) of the projected graph and the directed $\cH$-eigenvector centrality (H3), in the following real chemical reaction dataset hypergraphs: a) RTG b) unibimolecular c) surface d) astro. }
    \label{fig:chemical-correlation}
\end{figure}

From these figures we can see that the rankings from the pairwise projection networks and the ones from our new method are positively correlated (as they should), being very similar in some cases (e.g. subfigures \textbf{a} and \textbf{b}) but different enough that they represent a different meaning of what importance is. {The ``oscillations'' in the intermediate range point to the fact that, even though the correlation tends stabilize when we consider most nodes, there are some intermediately ranked nodes in a measure, considered very relevant in the other, causing the down spikes. We have also briefly noted this in the manuscript.}


\subsection{$k$-step hypergraphs} \label{subsec:acyclic-kstep}

In \cite{xu2023two} the authors designed a new centrality measure (the \textit{two-step eigenvector centrality}) for undirected graphs, which consisted of constructing a tensor (the \textit{two-step tensor}) from powers of the adjacency matrix of the graph, which could then leverage the machinery of $\cH$-eigenvectors in order to rank the nodes of the original graph\footnote{Said article is only concerned with undirected graphs and $\cH$-eigenvectors, however there is actually no roadblock for extending it to directed graphs and $\cZ$-eigenvectors.}.

\begin{definition}[2-step centrality of a graph \cite{xu2023two}]
    Let $G=(V,E)$ be an undirected graph with $N$ nodes, let $A=(a_{ij})$ be its adjacency matrix. The 2-step eigenvector centrality of the graph is the Perron-like $\cH$-eigenvector solution to
    \begin{equation}
        \lambda c_i = \sum_{j,k=1}^N T_{ijk} c_{j}c_k, \quad \text{where} \quad T_{ijk} = a_{ij}a_{jk}. 
    \end{equation}
\end{definition}

Here we will study a more general version of this method: constructing the $(k-1)$-step tensor of a directed graph and analyzing its properties. 

\begin{definition}[$k$-step hyperedge]
Let $G=(V,E)$ a possibly directed graph with adjacency matrix $A=(a_{ij})$. Consider a sequence of its nodes $i_1\rightarrow i_2 \rightarrow \dots \rightarrow i_{k}$. A $k$-step hyperedge corresponding to such sequence is described by the adjacency component  
\begin{equation}
    T_{i_1 i_2 \dots i_{k}} = a_{i_1 i_2 } a_{i_2 i_3} \dots a_{i_{k-1} i_k}, 
\end{equation}
where $T_{i_1 i_2 \dots i_k}>0$ iff that walk exists in the network.
\end{definition}

It is not hard to see that a tensor constructed this way represents yet another type of heterogeneous hypergraph (one where order matters within a hyperedge but where there are no heads/tails), which we now analyze.

The most natural transposition in this case is that inherited from the transposed adjacency matrix, i.e.
\begin{equation}
(T_{i_1 i_2 \dots i_{k-1} i_k})^t = a^t_{i_1 i_2 } \dots a^t_{i_{k-1} i_k} = a_{i_k i_{k-1} } \dots a_{i_2 i_1} =  T_{i_k i_{k-1}\dots i_2 i_1}.
\end{equation} 

Clearly, for an undirected network we have the symmetry property $T_{i_1 i_2 \dots i_{k-1} i_k} = T_{i_k i_{k-1} \dots i_2 i_1}$, which is why in \cite{xu2023two} they safely ignore the transposition step, although omitting this discussion.

We have the following, completely standard formulation of the ``tensor apply'' operation.
\begin{equation}
    \left[ (\cT)^t \bc^{m-1} \right]_{i_k} = \sum_{i_{k-1}\dots, i_1=1}^n  T_{i_k i_{k-1}\dots i_1} \, c_{i_{k-1}} \dots c_{i_1}.
\end{equation}

In some sense this measures provides centrality to the target node from its $k-1$ ancestors. This is very reminiscent of centrality measures involving the iterated line graph or the Hashimoto matrix in standard networks \cite{krzakala2013spectral, aleja2019nonbacktracking}.


\medskip

Strong connectedness of the resulting tensor is guaranteed if the base graph is strongly connected by the following theorem.

\begin{theorem}[Strong connectedness of the hypergraph induced by $T_{i_1\dots i_k}$]
Let $G=(V,E)$ be an unweighted graph with adjacency matrix $A$. Let $H$ be the hypergraph induced by the tensor $T_{i_1 i_2\dots  i_m}=a_{i_1 i_2} a_{i_2 i_3} \dots  a_{i_{k-1} i_k}$. If $G$ is strongly connected, then $H$ is strongly connected.
\end{theorem}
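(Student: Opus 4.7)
The plan is to unwrap the definition of strong connectivity for the hypergraph $H$ (Definition \ref{def:strong-conn}), which reduces the question to showing that the induced matrix $M$ has the same support as the adjacency matrix $A$ of $G$. Concretely, I would first write
\begin{equation*}
    M_{ij} = \sum_{j_3,\dots,j_k} T_{i\, j\, j_3 \dots j_k} = a_{ij}\sum_{j_3,\dots,j_k} a_{j j_3}\, a_{j_3 j_4}\cdots a_{j_{k-1} j_k},
\end{equation*}
and identify the inner sum as the total number $W(j,k-2)$ of walks of length $k-2$ in $G$ starting at node $j$ (the case $k=2$ being trivial, since then $M = A$ directly).

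Next, I would use the assumption that $G$ is strongly connected (and has at least two vertices, otherwise the statement is vacuous) to argue that every vertex has out-degree at least one: if $j$ had no outgoing edge, no other vertex could be reached from $j$, contradicting strong connectivity. By induction on the length, starting from any $j$ one can always extend an existing walk by one more outgoing edge, so $W(j, \ell) \geq 1$ for every $\ell \geq 0$ and every $j \in V$. In particular $W(j, k-2) \geq 1$, so the scalar factor multiplying $a_{ij}$ in the expression for $M_{ij}$ is strictly positive.

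From this it follows immediately that $M_{ij} > 0$ if and only if $a_{ij} > 0$, so the directed graph $G^M$ induced by $M$ coincides (as a binary directed graph) with $G$ itself. Since $G$ is strongly connected by hypothesis, so is $G^M$, which by Definition \ref{def:strong-conn} means that $H$ is strongly connected.

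I do not anticipate any serious obstacle here: the only delicate point is justifying that walks of length $k-2$ emanating from every vertex always exist, which is an easy consequence of strong connectivity but worth stating explicitly so the argument also covers small values of $k$ and pathological cases (e.g., $|V|=1$, where the statement is either vacuous or trivially true with a self-loop).
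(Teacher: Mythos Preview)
Your proposal is correct and follows essentially the same route as the paper: factor $M_{ij}=a_{ij}\cdot(\text{number of length-}(k-2)\text{ walks from }j)$, use strong connectivity of $G$ to guarantee this count is at least one, and conclude that $G^M$ has the same edges as $G$, hence is strongly connected. Your write-up is in fact a bit more careful than the paper's (you spell out the out-degree argument and handle the trivial case $k=2$), but the core idea is identical.
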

\begin{proof}
    Consider the definition of strong connectivity based on the graph $G^M$ (see Definition \ref{def:strong-conn})
\begin{equation}
    M_{ij} = \sum_{i_3, \dots, i_k}^N T_{ij i_1 \dots  i_k} = \sum_{i_3, \dots,  i_k}^N a_{ij} a_{ji_3} \dots  a_{i_{k-1} i_k}=  a_{ij} \sum_{i_3, \dots,  i_k}^N a_{ji_3} \dots  a_{i_{k-1} i_k} \geq a_{ij},
\end{equation}
because if $G$ is strongly connected then there must be at least one $(k-2)$-long walk starting from $j$ (possibly repeating edges and nodes). Therefore $M \geq A$ component-wise, so if $G$ is strongly connected, $G^M$ is strongly connected and so is $H$. 
\end{proof}

Extending the proof for positively weighted graphs is straightforward (in which case we have $M_{ij} > 0$ instead). Note that this Theorem applies to both undirected as well as directed graphs.



\subsubsection{Numerical comparisons}

To illustrate the difference between the usual eigenvector centrality and this new measure, we will compute them for three real directed networks: the Chicago road network (12982 nodes, 39018 edges), the European road network (1174 nodes, 1417 edges) and the OpenFlights dataset (2939 nodes,	30501). The choice of transport networks is due to the intrinsic relation between the $k$-step hypergraphs and the routing transportation application which we discussed at the end of Section~\ref{sec:heterogeneous}. 

These results again showcase the disparity in rankings with the Spearman's $\rho$ correlation, which signals that the new measure is capturing different features (namely, it takes into account all path of length $k$) in the computation of the centrality scores. The results are shown in Figure \ref{fig:kstep-real}.
\begin{figure}[h!]
    \hspace{-0.4in}
    \includegraphics[width=1.1\textwidth]{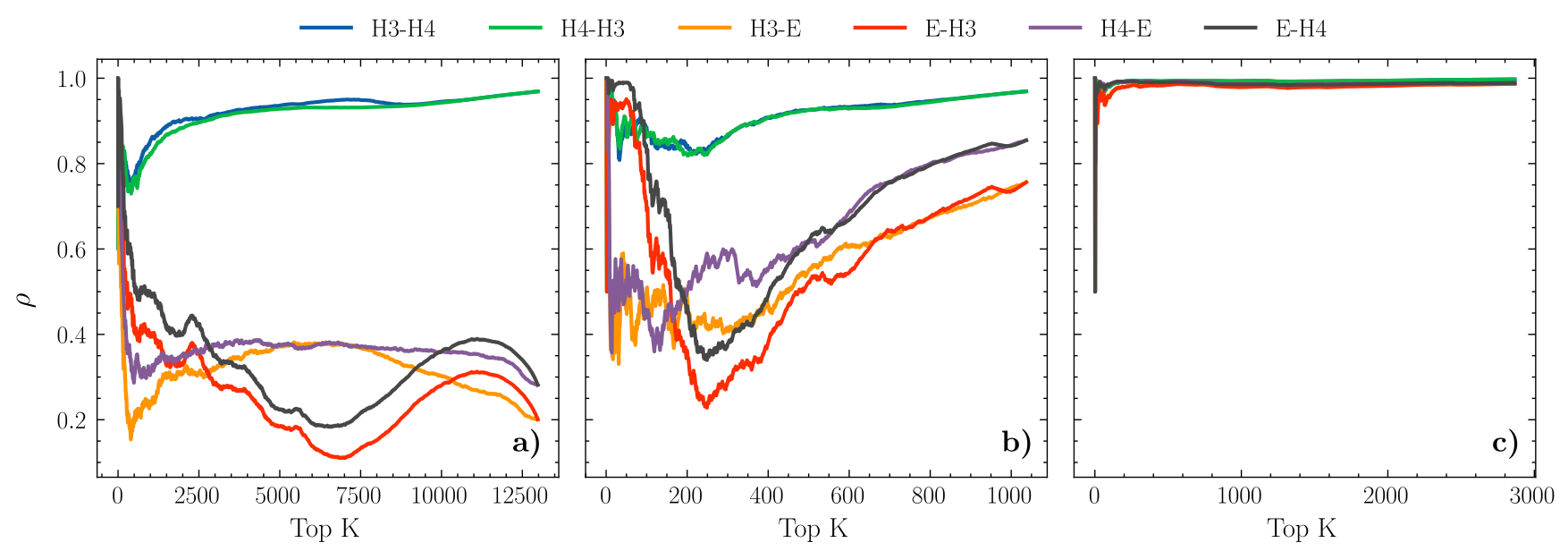}
    \caption{Comparison between Spearman's $\rho$ correlation between the Eigenvector Centrality (E) of the original directed graphs and their $k$-step eigenvector centrality for $k=3,4$ (H3, H4) in the case a) Chicago roads b) European roads c) OpenFlights}
    \label{fig:kstep-real}
\end{figure}

In the first two subfigures one can clearly see that the correlation between the 3-step and 4-step HEC of the graph are very similar, which means that there is almost no new information when increasing the length of paths sampled with the $k$-step centrality. The other comparisons show a higher disparity between rankings, meaning each measure is measuring something different, being positive nonetheless.

In the last subfigure, corresponding to a flight transportation network, we find it remarkable that the rankings are almost identical, having the Spearman correlation very close to 1 for all $K$ sampled. This last dataset corresponds to an undirected network, which, for the purposes of this article, is treated as a directed one with both directions per edge. However this bears no relation with the high correlation, and indeed we have checked that other undirected networks (e.g. the US Power Grid) do not have this feature.

\section{Conclusions} \label{sec:conclusions}

In this paper, we embarked on a journey through the realm of complex systems, where the traditional directed relationships of graph theory are transcended by the multifaceted interactions captured by hypergraphs. These structures, enriched by hyperedges linking multiple nodes, offer a more expressive framework for understanding complex systems across diverse disciplines.

A key concept we focused on was that of centrality, fundamental in network analysis as a mean to quantify the relative significance of nodes within a network. Spectral centralities have a special place for their analytical tractability and fast computation, but their adaptation to hypergraphs presents challenges which demand a mathematically rigorous response.


Our paper sheds light on the wide variety of non-undirected hyperedges, beyond traditional directedness, acknowledging diverse forms of multiple interactions among individuals. By introducing the term ``heterogeneous hyperedge'', we aimed to broaden the understanding of non-undirected interactions, including directed hyperedges as a specific type. {In order to convey the need of these structures in real data, we listed several real systems which would benefit from such a description; future studies and data collections will surely reveal the relevance in these scenarios. From a more theoretical point of view, several problems are now open for discussion: any prior works relying on directed hypergraphs (e.g. hypergraph curvature \cite{leal2021forman}, synchronization \cite{gallo2022synchronization} or community detection \cite{contreras2023detecting}) need to be extended to heterogeneous hypergraphs, specially when adjacency representations are involved.}

We not only raised awareness about the existence and significance of heterogeneous interactions but also proposed extensions of spectral centralities to various forms of heterogeneous hyperedges. Through numerical simulations on real and synthetic hypergraphs, we illustrated the practical implications of these extensions.

As we conclude our study, we recognize the complexity and richness inherent in the study of directed hypergraphs and heterogeneous interactions. By laying the groundwork for further exploration, we hope to inspire future research endeavors aimed at unraveling the intricate fabric of complex systems' dynamics.


\subsection*{Funding}

This work has been partially supported by projects M3033 (URJC Grant) and the INCIBE/URJC Agreement M3386/2024/0031/001 within the framework of the Recovery, Transformation and Resilience Plan funds of the European Union (Next Generation EU). G. C-A. is funded by the URJC fellowship PREDOC-21-026-2164.

\subsection*{Data Availability Statement}

The data used for the numerical results presented in this work, as well as the code written to analyze and plot them, can be found in the repository 

\url{https://github.com/LaComarca-Lab/DirectedHyperCentrality}.





\printbibliography



\end{document}